\documentclass[pdftex,pra,aps,onecolumn,twoside,nofootinbib,showkeys, showpacs]{revtex4}
\pdfoutput=1
\usepackage[utf8]{inputenc}
\usepackage[dvips]{graphicx}
\usepackage[stable]{footmisc}
\usepackage{amsmath,amsthm,amssymb}
\usepackage{youngtab}
\usepackage{mathtools}
\usepackage[all]{xy}
\usepackage{dsfont}
\usepackage{multirow}
\usepackage{rotating}
\usepackage{appendix}
\usepackage{color}
\setcounter{MaxMatrixCols}{20}

\def\squareforqed{\hbox{\rlap{$\sqcap$}$\sqcup$}}
\def\qed{\ifmmode\squareforqed\else{\unskip\nobreak\hfil
\penalty50\hskip1em\null\nobreak\hfil\squareforqed
\parfillskip=0pt\finalhyphendemerits=0\endgraf}\fi}

\def\duzomniejsze{<\kern-.7mm<}
\def\duzowieksze{>\kern-.7mm>}

\def\textbf#1{{\bf #1}}
\def\beq{\begin{equation}}
\def\eeq{\end{equation}}
\def\be{\begin{equation}}
\def\ee{\end{equation}}
\def\bal{\begin{align}}
\def\eal{\end{align}}
\def\ben{\begin{eqnarray}}
\def\een{\end{eqnarray}}
\def\beqa{\begin{eqnarray}}
\def\eeqa{\end{eqnarray}}
\def\eea{\end{array}}
\def\bea{\begin{array}}

\newcommand{\bei}{\begin{itemize}}
\newcommand{\eei}{\end{itemize}}
\newcommand{\bee}{\begin{enumerate}}
\newcommand{\eee}{\end{enumerate}}


\def\tr{{\rm Tr}}

\def\>{\rangle}
\def\<{\langle}

\def\ot{\otimes}

\newtheorem{lemma}{Lemma}
\newtheorem{corollary}{Corollary}
\newtheorem{proposition}{Proposition}
\newtheorem{theorem}{Theorem}
\newtheorem{definition}{Definition}

\newtheorem{remark}{Remark}

\theoremstyle{plain}

\def\bed{\begin{definition}}
\def\eed{\end{definition}}
\def\bel{\begin{lemma}}
\def\eel{\end{lemma}}
\def\bet{\begin{theorem}}
\def\eet{\end{theorem}}

\newcommand{\im}{\operatorname{Im}}

\newcommand{\R}{R^{-1}}

\begin{document}

\title{Using non positive maps to characterize entanglement witnesses}
\author{Marek Mozrzymas$^1$, Adam Rutkowski$^{2,3}$, Micha{\l } Studzi{\'n}%
ski$^{2,3}$}
\affiliation{$^1$ Institute for Theoretical Physics, University of Wroc{\l }aw, 50-204
Wroc{\l }aw, Poland\\
$^2$ Faculty of Mathematics, Physics and Informatics, University of Gda{\'n}%
sk, 80-952 Gda{\'n}sk, Poland\\
$^3$ Quantum Information Centre of Gda\'{n}sk, 81-824 Sopot, Poland}
\date{\today }

\begin{abstract}
In this paper we present a new method for entanglement witnesses construction. We show that to construct such an object we can deal with maps which are not positive on the whole domain, but only on a certain sub-domain. In our approach crucial role play such maps which are surjective between sets $\mathcal{P}_{k}^d$  of $k \leq d$ rank projectors and the set $\mathcal{P}_1^d$ of rank one projectors acting in the $d$ dimensional space. We argue that our method can be used to check whether a  given observable is an entanglement witness. In the second part of this paper we show that inverse reduction map satisfies this requirement and using it we can obtain a bunch of new entanglement witnesses.
\end{abstract}

\pacs{03.67.Mn}
\keywords{separability, entanglement, entanglement witness, positive map}
\maketitle



\section{Introduction}
It is well known that quantum entanglement is the most important  resource in the field of quantum information theory. It is worth to mention here such significant achievements as quantum cryptography~\cite{Bennett}, quantum teleportation~\cite{Bennett2}, quantum dense coding~\cite{Bennett3}, quantum error corrections codes and many other important applications of this phenomena.
That is why, the knowledge we gain from dealing with entangled states with their classification is of priority importance.
However, still one of the biggest problems  in the field remains open. Namely, up to now we do not have satisfactory criteria to decide whether a given quantum state is separable or entangled. A full answer is delivered by a famous Peres-Horodecki criterion~\cite{Hor1,Peres} based on the idea of partial transposition, which gives necessary and sufficient criteria for separability for bipartite $2 \ot 2, 2 \ot 3$ systems, but unfortunately for higher dimensions this criterion is not conclusive. The problem is even more complicated if we lift it to multipartite case, but of course there are several approaches to detect entanglement or checking separability in general~\cite{Doherty1, Doherty2,Guhne}.  Despite these difficulties, fortunately there is one   most general method to decide when quantum composite state is entangled. It is based on the concept of an entanglement witness firstly introduced in~\cite{Terhal} making use of the famous Hahn-Banach theorem. This approach allows us to detect entanglement  without full knowledge about the quantum state. What is the most important any entangled state has a corresponding entangled witness, so this property makes the mentioned method somehow universal. Exploring theory of entanglement witnesses from a mathematical point of view, there is a well known connection between them and the theory of positive maps~\cite{PH}, which allows us to understand much deeper the structure of the set of quantum states.

Let us say here a few words more about notation used in this manuscript. In this section and also in our further considerations  by $\mathcal{B}(\mathcal{\mathbb{C}}^d)$ (respectively $\mathcal{B}(\mathcal{H})$) we denote the algebra of all bounded linear operators on $\mathbb{C}^d$(respectively on $\mathcal{H}$). Using this notation let us define the following set:
\be
\label{Qset}
\mathcal{S}(\mathcal{H})=\{\rho \in \mathcal{B}(\mathcal{H}) \ | \ \rho \geq 0, \tr \rho=1\},
\ee
which is set of all states on space $\mathcal{H}$. Suppose now that we are dealing with two finite dimensional Hilbert spaces $\mathcal{H},\mathcal{K}$. State in the bipartite composition system $\rho \in \mathcal{S}(\mathcal{H}\ot \mathcal{K})$ is said to be separable if it can be written as $\rho=\sum_i p_i \rho_i \ot \sigma_i$, where $\rho_i,\sigma_i$ are states on $\mathcal{H}$ and $\mathcal{K}$ respectively, and $p_i$ are some positive numbers satisfying $\sum_i p_i=1$. Otherwise we say that state $\rho$ is entangled.

Now we are ready to present the definition of entanglement witness and basic ideas connected with these objects. Let us start from the definition of entanglement witness~\cite{Hor1},~\cite{Terhal}:

\begin{definition}
\label{def01}
The operator $W \in \mathcal{B}\left(\mathbb{C}^d\ot \mathbb{C}^d\right)$ is called entanglement witness when:
\begin{enumerate}
\item $W \ngeq  0$,
\item $\tr \left(\sigma W\right)\geq 0$ for all separable states $\sigma$.
\end{enumerate}
\end{definition}

There is a well known theorem~\cite{Hor1} which states that for every entangled state $\rho$ there exists a corresponding entangled witness $W$, such that $\tr(W\rho)<0$. Reader notices that this condition is equivalent to the first condition from the above definition.  From Definition~\ref{def01} we see that any entanglement witness corresponds to a hermitian operator, which thanks to Jamio{\l}kowski isomorphism~\cite{Jam} is connected with some positive, but not completely positive linear map $\Lambda: \mathcal{B}\left(\mathbb{C}^d\right) \rightarrow \mathcal{B}\left(\mathbb{C}^d\right)$, such that:
\be
\label{jamW}
W=\left(\mathbf{1}\ot \Lambda \right) P_d^+,
\ee
where $P_d^+$ is the projector on maximally entangled state $|\psi_+^d\>=(1/\sqrt{d})\sum_i |ii\>$.

 At this point for more information about entanglement witnesses and their properties we refer the reader to an excellent review paper treating this topic~\cite{chrust}. Unfortunately, definitions of entanglement witnesses are not really efficient in practice. Namely, to check whether a given observable $W$ is an entanglement witness we have to find a positive, but not completely positive linear map $\Lambda$, which after acting on the half of maximally entangled state gives as a result the operator $W$. Clearly,  finding such maps is a hard task and only in a few cases we can find their  form. The second method is checking the block positivity of the operator $W$, which is an extremely hard and time consuming task, because there is no general method of dealing with this problem. In this manuscript  we present a new approach to check when a given observable $W$ is an entanglement witness. This method is based on the idea of non positive maps. Of course our approach is not fully general, since we do not have a full characterization of non positive maps in the sense which we explain further on, but in our opinion it opens new opportunities in the field.


At the end of this introductory section we present the structure of our paper.
 Namely in Section~\ref{results} the main result of our work is contained.  In the Theorem~\ref{prop6} we show that to construct an entanglement witness we do not have to restrict to positive maps on the whole domain, but only on its certain subset. In particular such a map has to be at least a surjection between set of the rank $k \leq d$ projectors $\mathcal{P}_k^d$ and set of rank one projectors $\mathcal{P}_1^d$ acting in the $d$ dimensional space.

After that we present two short sections with examples which illustrate how our method works in practice. We start from the Section~\ref{reduction} where we show that the inverse reduction map satisfies all requirements from the Section~\ref{results}, then in the Section~\ref{examples} we show an  illustrative example of entanglement witnesses obtained thanks to the inverse reduction map.

Finally, we present Appendix~\ref{AppA} where we explain the basic properties of unitary spaces which are necessary to discuss inverse reduction map in the Section~\ref{reduction}. In the Appendix~\ref{AppB} we formulate Propositions~\ref{prop3} and~\ref{prop5} which together with the Remark~\ref{RemP} are necessary in the proof of Theorem~\ref{prop6} and the formulation itself play a very important role in the analysis of the inverse reduction map from the Section~\ref{reduction}.

%


\section{General construction of entanglement witness from non-positive map}
\label{results}
In this section we present our main result contained in the Theorem~\ref{prop6}. We show that to construct entanglement witnesses we do not have to restrict only to positive maps on the whole domain in general, but only on some specific subset. To do so, we can use map $\Lambda^{\dagger}: \mathcal{B}\left(\mathbb{C}^d\right) \rightarrow \mathcal{B}\left(\mathbb{C}^d\right)$, which is surjective  between set $\mathcal{P}_{k}^{d}$ of rank $k$ projectors and the set $\mathcal{P}_{1}^{d}$ of rank one projectors, which is given in the Proposition~\ref{prop3} contained in the Appendix~\ref{AppA}. Having this knowledge we are in the position to formulate the following:


\begin{theorem}
\label{prop6}
Let $W\in \mathcal{B}\left(\mathbb{C}^d\ot \mathbb{C}^d\right)$, $W=W^{\dagger}$ , $W\ngeq 0$ and $W$ is such that $\widetilde{W}=(\mathbf{1}%
\otimes \Lambda)W\geq 0$ for some linear map $\Lambda: \mathcal{B}\left(\mathbb{C}^d\right) \rightarrow \mathcal{B}\left(\mathbb{C}^d\right)$. We assume that the map $\Lambda^{\dagger}: \mathcal{B}\left(\mathbb{C}^d\right) \rightarrow \mathcal{B}\left(\mathbb{C}^d\right)$ is not positive on the whole domain but only maps surjectivley set $\mathcal{P}_{k}^{d}$ of rank $k$ projectors on the set $\mathcal{P}_{1}^{d}$ of rank one projectors, then we have:
\be
\forall |\psi\>,|\phi\> \in\mathbb{C}^{d}:||\psi||=||\phi||=1 \quad 0\leq \tr(W |\psi\>\<\psi|\otimes |\phi\>\<\phi|),
\ee
so the operator $W$ is an entanglement witness.
\end{theorem}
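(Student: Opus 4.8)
The plan is to show that $\tr(W\,|\psi\>\<\psi|\otimes|\phi\>\<\phi|)\geq 0$ for every pair of unit vectors $|\psi\>,|\phi\>\in\mathbb{C}^d$, which by Definition~\ref{def01} (combined with the hypothesis $W\ngeq 0$ and the standard decomposition of a separable state into pure product states) is exactly the statement that $W$ is an entanglement witness. The key idea is to transport the positivity of $\widetilde W=(\mathbf 1\otimes\Lambda)W$ back to $W$ by exploiting the adjoint relation between $\Lambda$ and $\Lambda^\dagger$ together with the surjectivity of $\Lambda^\dagger$ from $\mathcal{P}_k^d$ onto $\mathcal{P}_1^d$.

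First I would fix $|\psi\>$ and $|\phi\>$ with $\|\psi\|=\|\phi\|=1$ and observe that, since $\Lambda^\dagger$ maps $\mathcal{P}_k^d$ surjectively onto $\mathcal{P}_1^d$ (Proposition~\ref{prop3} in Appendix~\ref{AppA}), there exists a rank-$k$ projector $Q\in\mathcal{P}_k^d$ with $\Lambda^\dagger(Q)=|\phi\>\<\phi|$. Then, using the definition of the adjoint map with respect to the Hilbert–Schmidt inner product, $\tr\big(X\,\Lambda^\dagger(Y)\big)=\tr\big(\Lambda(X)\,Y\big)$, one computes
\be
\tr\big(W\,|\psi\>\<\psi|\otimes|\phi\>\<\phi|\big)=\tr\big(W\,|\psi\>\<\psi|\otimes\Lambda^\dagger(Q)\big)=\tr\big((\mathbf 1\otimes\Lambda)W\;\big(|\psi\>\<\psi|\otimes Q\big)\big)=\tr\big(\widetilde W\,\big(|\psi\>\<\psi|\otimes Q\big)\big).
\ee
Since $\widetilde W\geq 0$ by hypothesis and $|\psi\>\<\psi|\otimes Q\geq 0$ (it is a product of two positive operators, a rank-one projector and a rank-$k$ projector), the right-hand side is a trace of a product of two positive semidefinite operators, hence nonnegative. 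This gives the claim.

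The main obstacle — and the point that needs care — is justifying the interchange in the middle equality, i.e. that $(\mathbf 1\otimes\Lambda^\dagger)$ acting on the second tensor factor can be moved onto $W$ as $(\mathbf 1\otimes\Lambda)$; this is the "partial" version of the adjoint identity and must be checked to hold as an identity of operators on $\mathbb{C}^d\otimes\mathbb{C}^d$, which it does by linearity and by tensoring the identity map on the first factor with the scalar adjoint relation on the second factor (expanding $W$ in a product basis suffices). A secondary subtlety is that the surjectivity hypothesis only guarantees \emph{some} preimage $Q\in\mathcal{P}_k^d$ of $|\phi\>\<\phi|$, and the argument must not depend on which preimage is chosen — but it does not, since any rank-$k$ projector is positive and that is all we use. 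Finally one should remark (as the excerpt already notes) that the condition $W\ngeq 0$ together with $\tr(\sigma W)\geq 0$ on all separable $\sigma$ is precisely Definition~\ref{def01}, so no further work is needed to conclude that $W$ is an entanglement witness.
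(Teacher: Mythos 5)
Your proposal is correct and follows essentially the same route as the paper: both transfer the positivity of $\widetilde W=(\mathbf 1\otimes\Lambda)W$ through the partial adjoint identity onto a product projector $|\psi\rangle\langle\psi|\otimes Q$ with $Q\in\mathcal{P}_k^d$ and use the surjectivity of $\Lambda^{\dagger}$ from $\mathcal{P}_k^d$ onto $\mathcal{P}_1^d$ to reach every $|\phi\rangle\langle\phi|$. The only difference is cosmetic: you justify $\tr\bigl(\widetilde W\,(|\psi\rangle\langle\psi|\otimes Q)\bigr)\geq 0$ directly from $\widetilde W\geq 0$, whereas the paper routes this inequality through the Ky Fan characterization of the sum of the $k$ smallest eigenvalues together with Proposition~\ref{prop5}, a detour your argument renders unnecessary.
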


\begin{proof}
Let $W\in \mathcal{B}\left(\mathbb{C}^d\right)$, $W=W^{\dagger}$ , $W\ngeq 0$ and $W$ is such that $\widetilde{W}=(\mathbf{1}%
\otimes \Lambda)W\geq 0$, so we can write~\cite{Horn}:
\be
0\leq \sum_{i=1}^{k}\lambda _{i}=\min_{\widetilde{P}\in \mathcal{P}%
_{k}^{d^{2}}}\tr(\widetilde{W}\widetilde{P}),
\ee
where $0\leq \lambda _{1}\leq \lambda _{2}\leq ....\leq \lambda _{d-1}\leq
...\leq \lambda _{d^{2}}$ are eigenvalues of $\widetilde{W}$ and here $%
\mathcal{P}_{k}^{d^{2}}=\{P\in \mathcal{B}\left(\mathbb{C}^d\ot \mathbb{C}^d\right):P^{2}=P,\quad P^{\dagger}=P,\quad \tr(P)=k\}$. Now, we choose a particular
orthogonal projector $P=|\psi\>\<\psi|\otimes
\sum_{i=1}^{k}|\phi_i\>\<\phi_i|=\sum_{i=1}^{k}|\omega_i\>\<\omega_i|$ from the
Proposition~\ref{prop5} we get
\be
\label{as}
0\leq \sum_{i=1}^{k}\lambda _{i}=\min_{\widetilde{P}\in \mathcal{P}
_{k}^{d^{2}}}\tr(\widetilde{W}\widetilde{P})\leq \tr(\widetilde{W}P)=\tr((
\mathbf{1}\otimes \Lambda)WP).
\ee
Now, we can continue rewriting the right hand side of the formula~\eqref{as} as
\be
\label{5}
0\leq \tr((\mathbf{1}\otimes \Lambda)WP)=\tr(W(\mathbf{1}\otimes \Lambda^{\dagger})|\psi\>\<\psi|\otimes
\sum_{i=1}^{k}|\phi_i\>\<\phi_i|)=\tr(W|\psi\>\<\psi|\otimes
\Lambda^{\dagger}(\sum_{i=1}^{k}|\phi_i\>\<\phi_i|))),
\ee
where by $\Lambda^{\dagger}$ we denote the adjoint~\footnote{Suppose that we are given the liner map $\Lambda: \mathcal{B}\left(\mathbb{C}^d\right)\rightarrow \mathcal{B}\left(\mathbb{C}^d\right)$, then the adjoint map is defined as $
\tr\left(A \Lambda(B)\right)=\tr\left(B \Lambda^{\dagger}(A)\right), \quad \forall \ A,B \in \mathcal{B}\left(\mathbb{C}^d\right).
$
We say that the linear map $\Lambda$ is self-adjoint when $A,B \in \mathcal{B}\left(\mathbb{C}^d\right) \
\tr\left(A \Lambda(B)\right)=\tr\left(B \Lambda(A)\right).
$ Moreover, if $\Lambda$ is a positive map then  $\Lambda^{\dagger}$ is a positive map, too.} The projector $P$ is chosen from all projectors $\mathcal{P}_k^{d^2}$ in such a way that we get the obvious second inequality in equation~\eqref{as}, but first of all we get a desired formula on RHS of eq.~\eqref{5}. In this proof the role of $P$ is technical and purely auxiliary. The projectors $\sum_{i=1}^{k}|\phi_i\>\<\phi_i|$ of rank $k$
generate the set $\mathcal{P}_{k}^{d}=\{P\in \mathcal{B}\left(\mathbb{C}^d\right):P^{2}=P,\quad P^{\dagger}=P\qquad \tr(P)=k\}$,  thus from the assumptions it
follows that $\{\Lambda^{\dagger}(\sum_{i=1}^{k}|\phi_i\>\<\phi_i|):\<\phi_{i}|\phi_{j}\>=\delta
_{ij}\}=\mathcal{P}_{1}^{d}=\{P\in \mathcal{B}\left(\mathbb{C}^d\right):P^{2}=P,\quad P^{\dagger}=P\qquad \tr(P)=1\}$. It means that $W\in \mathcal{B}\left(\mathbb{C}^d\ot \mathbb{C}^d\right)$, $W=W^{\dagger}$, $W\ngeq 0$ takes non-negative expectation values on
separable states. This finishes the proof.
\end{proof}

\begin{remark}
\label{marek1}
Form the proof of Theorem~\ref{prop6} it follows that the operator $W$ cannot take negative values on the product states.
\end{remark}

	\begin{remark}
		One can see that we can use statement from the Theorem~\ref{prop6} to chceck whether a given $W$ is an entanglement witness. Namely, for a given $W$ which fulfills assumptions it is enough to find such a map $\Lambda^{\dagger}: \mathcal{B}\left(\mathbb{C}^d\right) \rightarrow \mathcal{B}\left(\mathbb{C}^d\right)$ acting surjectivley between set $\mathcal{P}_k^d$ and $\mathcal{P}_1^d$ for which we have $(\mathbf{1}%
		\otimes \Lambda)W\geq 0$.
	\end{remark}
In the next paragraph we show that the inverse reduction map fulfils all required assumptions and we can use it to check if an observable $W$ is an entanglement witness without checking block-positivity, which is a hard task in general. Of course in the case when for a given $W$ condition $(\mathbf{1}%
\otimes \Lambda)W\geq 0$ is not satisfied our method is not conclusive.

\section{Inverse reduction map as an example}
\label{reduction}
In the previous section we have considered general maps $\Lambda$ with certain properties. Naturally a question arises: Do we know any examples of the maps which satisfy required demands? The goal of this paragraph is to present such an example. Let us consider the following linear map

\begin{definition}
\label{def1}
\be
\R:\mathcal{B}\left(\mathbb{C}^d\right)\rightarrow \mathcal{B}\left(\mathbb{C}^d\right),\quad \quad \forall A\in \mathcal{B}\left(\mathbb{C}^d\right) \quad \R(A)=\frac{1}{d-1}\tr(A)\mathbf{1}-A.
\ee
\end{definition}

The linear map $\R$ acts in the linear space $\mathcal{B}\left(\mathbb{C}^d\right)$ which is  a Hilbert space with respect to the standard Hilbert-Schmidt scalar
product
\be
\forall A,B\in \mathcal{B}\left(\mathbb{C}^d\right)\quad (A,B)\equiv \tr(A^{\dagger}B),
\ee
where $\dagger$ is the hermitian conjugation.

\begin{remark}
\label{rem01}
The reduction map~\cite{Hor3} is defined as
\be
R: \mathcal{B}(\mathcal{H})\rightarrow \mathcal{B}(\mathcal{H}), \quad \forall A\in \mathcal{B}(\mathcal{H}) \quad R(A)=\tr(A)\mathbf{1}-A.
\ee
Indeed Reader can check that for arbitrary $A\in \mathcal{B}(\mathcal{H})$ we have $\R \circ R(A)=R \circ \R(A)=A$, which means that  $\ker \left(\R \right)=\{0\},$ and we have $\forall A\in \mathcal{B}\left(\mathbb{C}^d\right)\quad R(A)=\tr(A)\mathbf{1}-A$. Note that if $d=2$, then $R=\R$. In this case both maps $R,\R$ are positive.
\end{remark}

\begin{proposition}
\label{prop4}
The map $\R$ has the following properties:
\begin{enumerate}
\item The map $\R$ is self-adjoint with respect to the Hilbert-Schmidt scalar product
i.e. we have
\be
(\R(A),B)=(A,\R(B)).
\ee


\item Suppose that $A\in \mathcal{B}(\mathbb{C}^d)$ is an orthogonal projector of rank $d-1$ i.e. $A^{2}=A,\quad A^{\dagger}=A$ and
$\tr(A)=d-1,$ then
\be
\R(A)=\mathbf{1}-A\Rightarrow \R(A)^{2}=\R(A),\quad \R(A)^{\dagger}=\R(A),\quad
\tr \R(A)=1,
\ee
so the image of the map $\R$ on any orthogonal projector of rank $d-1$ is an
orthogonal projector of rank $1$. Moreover, the map $\R$ establishes a
bijective correspondence between the set of all orthogonal projectors of rank
$d-1$ and the set of all orthogonal projectors of rank $1$.
\item If $\forall X\in \mathcal{B}\left(\mathbb{C}^d\right)$ we have $R^{-1}(X)\geq 0$ then $X\geq 0$.
\end{enumerate}
\end{proposition}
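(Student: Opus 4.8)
The plan is to verify the three items in order, since each is essentially a direct computation using the explicit formula $\R(A)=\frac{1}{d-1}\tr(A)\mathbf 1 - A$ from Definition~\ref{def1}. For item~1, I would expand $(\R(A),B)=\tr(\R(A)^\dagger B)$ using the definition; since $\tr(A)$ is a scalar and $\mathbf 1^\dagger=\mathbf 1$, this becomes $\frac{1}{d-1}\overline{\tr(A)}\tr(B) - \tr(A^\dagger B)$. The first term is symmetric under swapping $A\leftrightarrow B$ once one writes $\overline{\tr(A)}=\tr(A^\dagger)$, and $\tr(A^\dagger B)$ pairs with $\tr(B^\dagger A)$ under the conjugation built into the Hilbert–Schmidt product, so one obtains $(A,\R(B))$. (Indeed the map is not merely self-adjoint but, being real-linear in the obvious basis, one can equally check $\tr(A\R(B))=\tr(B\R(A))$.) This step is routine and should take only a couple of lines.

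For item~2, suppose $A$ is an orthogonal projector with $A^2=A$, $A^\dagger=A$, $\tr A=d-1$. Plugging into the formula gives $\R(A)=\frac{1}{d-1}(d-1)\mathbf 1 - A=\mathbf 1-A$. Then $(\mathbf 1-A)^2=\mathbf 1-2A+A^2=\mathbf 1-2A+A=\mathbf 1-A$, it is visibly Hermitian, and $\tr(\mathbf 1-A)=d-(d-1)=1$, so $\R(A)$ is a rank-one orthogonal projector. For the bijective-correspondence claim I would invoke Remark~\ref{rem01}: $\R$ is invertible on all of $\mathcal B(\mathbb C^d)$ with inverse the reduction map $R$, so it suffices to note that $R$ carries a rank-one projector $Q$ (with $\tr Q=1$) to $R(Q)=\tr(Q)\mathbf 1-Q=\mathbf 1-Q$, which by the same computation is a rank-$(d-1)$ projector; hence $\R$ restricts to a bijection between the two projector sets with inverse $R$ restricted analogously.

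For item~3, the statement is: if $R^{-1}(X)\ge 0$ for all $X$ — more precisely, I read this as: for a given $X$, if $R^{-1}(X)\ge 0$ then $X\ge 0$ — I would use that $R^{-1}=\R$ (Remark~\ref{rem01}) together with the fact that the ordinary reduction map $R$ is a positive map~\cite{Hor3}. Concretely, if $Y:=R^{-1}(X)=\R(X)\ge 0$, then $X=R(Y)$, and since $R$ is positive and $Y\ge 0$ we get $X=R(Y)\ge 0$. So this item reduces immediately to positivity of the reduction map, a standard fact, once the inversion identity $R\circ R^{-1}=\mathrm{id}$ from Remark~\ref{rem01} is in hand.

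The only genuinely substantive point — and the one I would treat as the main obstacle — is making sure the invertibility claim $\R\circ R = R\circ\R = \mathrm{id}$ used in items~2 and~3 is on firm footing; this is asserted in Remark~\ref{rem01} but worth confirming by a direct computation: $R(\R(A))=\tr(\R(A))\mathbf 1-\R(A)$, and since $\tr(\R(A))=\frac{d}{d-1}\tr(A)-\tr(A)=\frac{1}{d-1}\tr(A)$, this equals $\frac{1}{d-1}\tr(A)\mathbf 1-\bigl(\frac{1}{d-1}\tr(A)\mathbf 1-A\bigr)=A$, and symmetrically for $\R\circ R$. Everything else is bookkeeping with the trace.
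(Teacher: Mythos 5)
Your proposal is correct, and it is worth noting that the paper itself offers no written-out proof of Proposition~\ref{prop4}: it only remarks that the statements "can be directly deduced" from Appendix~\ref{AppA}, i.e.\ from the kernel/image decomposition of a rank-$(d-1)$ projector and the unique correspondence $P=\mathbf{1}-Q$ of Corollary~\ref{cor1}. Your route differs in flavor rather than substance. For item~2 the paper's bijectivity claim rests on the geometric argument of Proposition~\ref{prop1} and Corollary~\ref{cor1} (uniqueness of the complementary projector $Q=|\psi\>\<\psi|$ on $\ker P$), whereas you obtain it more algebraically from the inversion identity $R\circ\R=\R\circ R=\mathrm{id}$ together with the symmetric computation showing $R$ sends rank-one projectors to rank-$(d-1)$ projectors; both are sound, and your verification of the inversion identity (via $\tr(\R(A))=\frac{1}{d-1}\tr(A)$) fills in a step the paper only asserts in Remark~\ref{rem01}. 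Items~1 and~3 are not actually covered by Appendix~\ref{AppA} at all, so your explicit Hilbert--Schmidt computation for self-adjointness and your argument for item~3 --- reading the hypothesis per given $X$, writing $X=R(\R(X))$ and invoking positivity of the reduction map $R$ (eigenvalues of a positive $Y$ never exceed $\tr Y$) --- supply exactly the missing details; this is arguably a cleaner account than the paper's pointer. The only caveat is implicit in both treatments: everything requires $d\geq 2$ so that the factor $\frac{1}{d-1}$ makes sense, and for item~3 one should state explicitly that positivity of $R$ is the standard fact being used, since it is cited in the paper only via the reduction-criterion reference.
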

The proof of above statements can be directly deduced from the facts contained in the Appendix~\ref{AppA}.


\begin{corollary}
\label{cor2}
The operator $\mathbf{1}\otimes \R$ is also self-adjoint with respect to the tensor product scalar product
\be
\forall A,B,X,Y\in \mathcal{B}\left(\mathbb{C}^d\right) \quad (A\otimes B,X\otimes Y)\equiv (A,X)(B,Y),
\ee
where $(A,B)\equiv \tr(A^{\dagger}B)$.
\end{corollary}

\begin{remark}
\label{rem1}
The map $\R:\mathcal{B}\left(\mathbb{C}^d\right)\rightarrow \mathcal{B}\left(\mathbb{C}^d\right)$ is not positive but $\R$ restricted to the set $\mathcal{P}_{d-1}^d$ is a positive map. Indeed, as an example let us take matrix $\mathbb{I}$ which is filled only by ones and it is positive.
Now acting by $\R$ we have $A=\R(\mathbb{I})=\frac{d}{d-1}\mathbf{1}-\mathbb{I}$, with $\operatorname{spec}(A)=\left\{\frac{d(2-d)}{d-1},\frac{d}{d-1},\ldots,\frac{d}{d-1}\right\}$.
We notice that whenever $d>2$, then $\frac{d(2-d)}{d-1}<0$, so $A$ is no longer positive.
Summarizing, when $R,\R$ are not equal (for $d \geq 3$, see Remark~\ref{rem01}), then the main difference between them is that $R$ is positive but $\R$ it is not in general.
\end{remark}
 To sum up, the  inverse reduction map $\R$  satisfies all conditions from the assumptions of the Theorem~\ref{prop6}, so it can be used for the entanglement witness construction. Moreover, thanks to Proposition~\ref{prop4} point 1) this map is self-adjoint, so it satisfies even stronger conditions than we require.

\section{Explicit examples of entanglement witnesses}
\label{examples}
In this section we use Theorem~\ref{prop6} together with the Definition~\ref{def1} of the inverse reduction map from the previous section to present how to check in an easy way whether a given observable is an entanglement witness. Later we show an explicit construction of the new class of entanglement witnesses. We start our consideration from an illustrative example, which shows how to omit checking block-positivity. Let us take Choi-like entanglement witness $W_{Ch}\in \mathcal{B}\left(\mathbb{C}^3 \otimes \mathbb{C}^3 \right) $ from~\cite{Choi}:
\be
\label{choi}
W_{Ch}=\left(\begin{array}{ccc|ccc|ccc}
 \cdot & \cdot & \cdot & \cdot & -1 & \cdot & \cdot & \cdot & -1\\
	\cdot & 1 & \cdot  & \cdot & \cdot & \cdot  & \cdot & \cdot & \cdot\\
	\cdot & \cdot & 1  & \cdot & \cdot & \cdot  & \cdot & \cdot & \cdot\\
	\hline
	\cdot & \cdot & \cdot  & 1 & \cdot & \cdot  & \cdot & \cdot & \cdot\\
	-1 & \cdot & \cdot  & \cdot & \cdot & \cdot  & \cdot & \cdot & -1\\
	\cdot & \cdot & \cdot  & \cdot & \cdot & 1  & \cdot & \cdot & \cdot\\
	\hline
	\cdot & \cdot & \cdot  & \cdot & \cdot & \cdot  & 1 & \cdot & \cdot\\
	\cdot & \cdot & \cdot  & \cdot & \cdot & \cdot  & \cdot & 1 & \cdot\\
	-1 & \cdot & \cdot  & \cdot & -1 & \cdot  & \cdot & \cdot & \cdot\\
\end{array}\right),
\ee
where dots  denote zeros. Now, we show that Theorem~\ref{prop6} together with the property of the reduction map $R$ and its inverse $R^{-1}$ from the previous section implies immediately that $W_{Ch}$ satisfies the second point from the Definition~\ref{def01}, i.e. we show its block-positivity non directly. Namely, we have:
\be
\label{choi2}
\widetilde{W}_{Ch}=\left(\mathbf{1} \otimes \R \right) W_{Ch}=\left(\begin{array}{ccc|ccc|ccc}
	1 & \cdot & \cdot & \cdot & 1 & \cdot & \cdot & \cdot & 1\\
	\cdot & \cdot & \cdot  & \cdot & \cdot & \cdot  & \cdot & \cdot & \cdot\\
	\cdot & \cdot & \cdot  & \cdot & \cdot & \cdot  & \cdot & \cdot & \cdot\\
	\hline
	\cdot & \cdot & \cdot  & \cdot & \cdot & \cdot  & \cdot & \cdot & \cdot\\
	1 & \cdot & \cdot  & \cdot & 1 & \cdot  & \cdot & \cdot & 1\\
	\cdot & \cdot & \cdot  & \cdot & \cdot & \cdot  & \cdot & \cdot & \cdot\\
	\hline
	\cdot & \cdot & \cdot  & \cdot & \cdot & \cdot  & \cdot & \cdot & \cdot\\
	\cdot & \cdot & \cdot  & \cdot & \cdot & \cdot  & \cdot & \cdot & \cdot\\
	1 & \cdot & \cdot  & \cdot & 1 & \cdot  & \cdot & \cdot & 1\\
\end{array}\right)\geq 0.
\ee
Furthermore, we have $\left(\mathbf{1} \otimes R \right)\widetilde{W}_{Ch}=W_{Ch} $.

At the end of this section let us demonstrate how this theory works generalizing the Choi witness. We start from a positive semi-definite operator $0\leq\widetilde{W}\in\mathcal{B}\left(\mathbb{C}^{d}\otimes\mathcal{\mathbb{C}}^{d}\right)$
in the standard operator basis $\mathcal{B}\left(\mathbb{C}^d\right)\ni e_{ij}=\left|i\left\rangle \right\langle j\right|$ for $i,j=1,\ldots,d$:

\begin{equation}
\label{rij}
\widetilde{W}=\sum_{i,j=1}^{d}e_{ij}\otimes\widetilde{W}_{ij}.
\end{equation}
Let $S\in\mathcal{B}\left(\mathbb{C}^{d}\right)$ be a shift operator
defined as:

\[
S\left|i\right\rangle :=\left|i+1\right\rangle \qquad\text{mod}\, d,
\]
then using the above definition we can write operators $\widetilde{W}_{ij}$ from formula~\eqref{rij} in the following way:

\begin{equation}
\label{diag}
\widetilde{W}_{ii}=S^{i-1}\left(\begin{array}{cccc}
a_{1} & 0 & \cdots & 0\\
0 & a_{2} & \cdots & 0\\
\vdots & \vdots & \ddots & \vdots\\
0 & 0 & \cdots & a_{d}
\end{array}\right)S^{(i-1)\dagger},\qquad a_{i}\geq0\quad\text{for} \ : i=1,\ldots,d,
\end{equation}
and for all off-diagonal elements i.e for all indices satisfying $i\neq j$

\begin{equation}
\label{ofdiag}
\widetilde{W}_{ij}=S^{i-1}\left(\begin{array}{cccc}
x & 0 & \cdots & 0\\
0 & 0 & \cdots & 0\\
\vdots & \vdots & \ddots & \vdots\\
0 & 0 & \cdots & 0
\end{array}\right)S^{(j-1)\dagger},\qquad  i,j=1,\ldots,d.
\end{equation}
Using the form of our operator $\widetilde{W}$ from equation~\eqref{rij} together with conditions on $\widetilde{W}_{ij}$ given in formulas~\eqref{diag} and~\eqref{ofdiag} we are able to write explicit conditions for positivity of state $\widetilde{W}$ in terms of parameters $a_i$ and $x$. Namely, we have the following:

\begin{remark}
\label{rem2}
Operator $\widetilde{W}\geq0$ if and only if submatrix $A$ is positive semidefinite

\begin{equation}
A=\left(\begin{array}{cccc}
a_{1} & x & \cdots & x\\
x & a_{1} & \cdots & x\\
\vdots & \vdots & \ddots & \vdots\\
x & x & \cdots & a_{1}
\end{array}\right)\geq0,
\end{equation}
it means that $x\in\left[\frac{-a_{1}}{d-1},a_{1}\right]$.
\end{remark}

Now, we are in the position to use all what we have learnt  from the previous Sections and use the reduction map to construct an appropriate example of entanglement witness. Namely, let us use as a map the inverse of the reduction map i.e $R:\mathcal{B}\left(\mathbb{C}^{d}\right)\rightarrow\mathcal{B}\left(\mathbb{C}^{d}\right)$
defined as follows:

\begin{equation}
R \left(O\right)=\text{Tr}\left(O\right)\mathbf{1}-O.
\end{equation}

The above map is not positive in general. As operator $W$
from Theorem~\ref{prop6} let us take $W=\left(\mathbf{1} \otimes R \right) \widetilde{W}$, then the following conditions should be satisfied

\begin{equation}
\label{eq:condition}
\begin{cases}
x\in\left[\frac{-a_{1}}{d-1},a_{1}\right]\\
x\in\left[-y_{1},\frac{y_{1}}{d-1}\right]\\
y_{k}\geq0 & \text{for}\quad k=1,\ldots,d,
\end{cases}
\end{equation}

where $y_{k}=\frac{1}{d-1}\sum_{i=1}^{d}a_{i}-a_{k},$ for $k=1,\ldots,d$. One can easily see that for dimension $d=3$ and parameters $a_1=0, \ a_2=a_3=1$ we recover Choi-like entanglement witness given by formula~\ref{choi}.

%

\section{Conclusions}
\label{con}
In this paper we have shown that to construct entanglement witnesses it is enough to consider maps which are not  necessarily positive on the whole domain, but only on some sub-domain. We can consider in general non-positive maps (see Theorem~\ref{prop6}) which are surjective functions from the set $\mathcal{P}_{k}^d$  of $k$ rank projectors to the set $\mathcal{P}_1^d$ of rank one projectors (see Corollary~\ref{cor1}, Proposition~\ref{prop3} and Remark~\ref{RemP}). Our illustrative example of such a map is the inverse reduction map (Definition~\ref{def1}) for which we have presented the explicit construction the new class of entanglement witnesses which can be treated as a generalization of Choi entanglement witness.

It is  worth to mention here one open problem connected with our construction. Firstly, it would be interesting to find more surjective maps between sets $\mathcal{P}_k^d$ and $\mathcal{P}_1^d$ in the context of checking whether a given observable $W$ is an entanglement witness without checking its block-positivity. Implementation of our method is much easier in application than checking the above mentioned block-positivity, and for sure for some class of operators (as we illustrated) we can use the statement from Theorem~\ref{prop6} directly which gives an answer almost immediately.
	Secondly,  we can ask about the connection between decomposability property and the structure of the chosen map or chosen operator $W$ (see Theorem~\ref{prop6}).

\section*{Acknowledgments}
 A. Rutkowski was supported by a postdoc internship decision number DEC\textendash{}2012/04/S/ST2/00002, from the Polish National Science Center. M. Studzi{\'n}ski is supported by grant 2012/07/N/ST2/02873 from National Science Centre. M. Mozrzymas is supported by ERC Project no. 291348 QOLAPS and National Science Centre project MaestroDEC-2011/02/A/ST2/00305. M. Mozrzymas would like to thank  Quantum Information Centre of Gda{\'n}sk, where some part of this work was done. Authors would like to thank Agnieszka Rutkowska for valuable comments regarding the first version of the manuscript.

\appendix
\section{Some important facts about unitary spaces}
\label{AppA}

 In this appendix we recall same basic properties of unitary spaces  which are important to understand our results contained in the Section~\ref{results} and the properties of the inverse reduction map from the Section~\ref{reduction}.

\begin{proposition}
\label{prop1}
Let $P$ be an orthogonal projector i.e.
\be
P\in \mathcal{B}\left(\mathbb{C}^d\right):P^{2}=P,\quad P^{\dagger}=P,\quad \tr(P)=d-1,
\ee
then $P$ gives a unique decomposition of the space $\mathbb{C}^{d}$ of the form
\be
\mathbb{C}^{d}=\im P\oplus \ker P:\ker P=(\im P)^{\perp }
\ee
and $\dim (\ker P)=1,$ $\dim (\im P)=d-1$ so $\im P$ is a
hyperplane. Moreover, if $\{|\psi\>_{i}\}_{i=1}^{d-1},\{|\phi\>_{i}\}_{i=1}^{d-1}\subset
\im P$ are two orthonormal bases in the subspace $\im P$ then
\be
P=\sum_{i=1}^{d-1}|\psi_i\>\<\psi_{i}|=\sum_{i=1}^{d-1}|\phi_i\>\<\phi_i|.
\ee
So the spectral decomposition of the projector $P$ does not depend on the
choice of the orthonormal basis in $\im P$.
\end{proposition}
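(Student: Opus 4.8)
The plan is to exploit the spectral theorem for the Hermitian operator $P$ together with the defining relation $P^2 = P$. First I would observe that since $P = P^\dagger$, the operator $P$ is diagonalizable in an orthonormal basis of $\mathbb{C}^d$ with real eigenvalues; since $P^2 = P$, every eigenvalue $\mu$ satisfies $\mu^2 = \mu$, hence $\mu \in \{0,1\}$. Therefore $\mathbb{C}^d$ splits as an orthogonal direct sum of the eigenspace for $1$ and the eigenspace for $0$, and one checks directly that the eigenspace for $1$ is exactly $\im P$ (if $Pv = v$ then $v \in \im P$; conversely if $v = Pw$ then $Pv = P^2 w = Pw = v$) while the eigenspace for $0$ is exactly $\ker P$. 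This gives the orthogonal decomposition $\mathbb{C}^d = \im P \oplus \ker P$ with $\ker P = (\im P)^\perp$, the orthogonality being automatic because eigenspaces of a Hermitian operator for distinct eigenvalues are orthogonal.

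Next I would pin down the dimensions. Since $P$ acts as the identity on $\im P$ and as zero on $\ker P$, its trace equals $\dim(\im P)$. The hypothesis $\tr(P) = d-1$ then forces $\dim(\im P) = d-1$, and consequently $\dim(\ker P) = d - (d-1) = 1$; in particular $\im P$ is a hyperplane. For the last assertion, given any orthonormal basis $\{|\psi_i\rangle\}_{i=1}^{d-1}$ of $\im P$, I would complete it to an orthonormal basis of $\mathbb{C}^d$ by adjoining a unit vector spanning $\ker P$; expressing $P$ in this basis (it is the identity on the span of the $|\psi_i\rangle$ and kills the last vector) yields $P = \sum_{i=1}^{d-1} |\psi_i\rangle\langle\psi_i|$. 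Since the same argument applies verbatim to the basis $\{|\phi_i\rangle\}_{i=1}^{d-1}$, both sums equal $P$, which is precisely the claimed basis-independence of the spectral decomposition.

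There is essentially no serious obstacle here: the statement is a standard structural fact about orthogonal projectors, and the only thing requiring a moment of care is the identification of $\im P$ (rather than merely a subspace of it) with the eigenvalue-$1$ eigenspace, which is where the idempotence $P^2 = P$ is genuinely used. Everything else is bookkeeping with the spectral theorem and the additivity of trace over an orthogonal decomposition. I would present the argument in the order above: first the eigenvalue dichotomy and orthogonal splitting, then the dimension count from the trace, then the explicit rank-one expansion in an arbitrary orthonormal basis of the image.
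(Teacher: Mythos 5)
Your proof is correct and complete: the eigenvalue dichotomy from $P^2=P$ together with hermiticity, the identification of the eigenvalue-$1$ eigenspace with $\im P$, the trace count giving $\dim(\im P)=d-1$, and the rank-one expansion in an arbitrary orthonormal basis of $\im P$ all go through without issue. Note that the paper itself states Proposition~\ref{prop1} without proof, recalling it in Appendix~\ref{AppA} as a standard fact about unitary spaces, so there is no competing argument to compare against; your spectral-theorem route is exactly the standard one such a proof would take, and the only point genuinely requiring care --- that idempotence is what makes $\im P$ coincide with the fixed-point space --- is the one you flag explicitly.
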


It is known that any set of orthonormal vectors in $\mathbb{C}^{d}$ (or in any linear space) may be extended to a basis of the space $\mathbb{C}^{d}$. Such  extensions are  not unique. The structure of the extensions of
orthonormal bases of the space $\im P$ to bases of the space $\mathbb{C}^{d}$ describes the following:

\begin{lemma}
\label{lem1}
Let $\{|\psi\>_{i}\}_{i=1}^{d-1},\{|\psi_{i}^{\prime }\>\}_{i=1}^{d-1}\subset \im P$
are two orthonormal bases in the subspace $\im P$ and $
\{|\psi\>_{i}\}_{i=1}^{d},\{|\psi_{i}^{\prime }\>\}_{i=1}^{d}\subset \mathbb{C}^{d}$ are their extensions to orthonormal bases in $\mathbb{C}^{d}$, then
\be
|\psi_{d}^{\prime }\>=e^{i\varphi }|\psi_{d}\>,\quad \varphi \in \lbrack 0,2\pi ),
\ee
where $|\psi_{d}^{\prime }\>,|\psi_{d}\>\in \ker P.$ So it means that for a given
orthonormal projector $P$ of rank $d-1$ there exists a vector $|\psi\>\in \ker P$
such that $||\psi||=1$ and the extension of any orthonormal basis $
\{|\psi_{i}\>\}_{i=1}^{d-1}$ in $\im P$ to a basis in $\mathbb{C}^{d}$ has the following form
\be
\{|\psi_{1}\>,...,|\psi_{d-1}\>,e^{i\varphi }|\psi\>\}.
\ee
Any vector of the form $e^{i\varphi }|\psi\>$ where $|\psi\>\in \ker P$, $||\psi||=1$ form
an orthonormal basis in one-dimensional subspace $\ker P$.
\end{lemma}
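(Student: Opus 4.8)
The plan is to reduce the whole statement to the one-dimensionality of $\ker P$ established in Proposition~\ref{prop1}. First I would observe that, since $\{|\psi_i\>\}_{i=1}^{d-1}$ is an orthonormal basis of $\im P$, the extending vector $|\psi_d\>$ is by construction a unit vector orthogonal to each $|\psi_i\>$ with $i=1,\dots,d-1$; being orthogonal to a basis of $\im P$ it is orthogonal to all of $\im P$, so $|\psi_d\>\in(\im P)^{\perp}=\ker P$ by the orthogonal decomposition $\mathbb{C}^{d}=\im P\oplus\ker P$ of Proposition~\ref{prop1}. The identical argument applied to the primed data gives $|\psi_d'\>\in\ker P$.

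Next, because $\dim(\ker P)=1$, the unit vectors $|\psi_d\>$ and $|\psi_d'\>$ lie in the same one-dimensional subspace, hence $|\psi_d'\>=c\,|\psi_d\>$ for some scalar $c$; since both vectors have norm $1$ we get $|c|=1$, i.e. $c=e^{i\varphi}$ for some $\varphi\in[0,2\pi)$, which is the first displayed equality.

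For the remaining assertions I would fix once and for all a unit vector $|\psi\>\in\ker P$ (possible since $\ker P\neq\{0\}$); by the argument just given it is unique up to a phase. Given any orthonormal basis $\{|\psi_i\>\}_{i=1}^{d-1}$ of $\im P$, every extension of it to an orthonormal basis of $\mathbb{C}^{d}$ has, by the first part, last vector of the form $e^{i\varphi}|\psi\>$. Conversely, for every $\varphi\in[0,2\pi)$ the vector $e^{i\varphi}|\psi\>$ is a unit vector in $\ker P$, hence orthogonal to $\im P$, so $\{|\psi_1\>,\dots,|\psi_{d-1}\>,e^{i\varphi}|\psi\>\}$ is an orthonormal system of $d$ vectors in $\mathbb{C}^{d}$ and therefore an orthonormal basis; this yields the stated parametrization of the extensions and of the unit vectors spanning $\ker P$.

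The argument is elementary linear algebra, so there is no genuinely hard step; the only point requiring a little care is making explicit that orthogonality to a spanning set of $\im P$ is the same as lying in $(\im P)^{\perp}=\ker P$, and then invoking $\dim\ker P=1$ from Proposition~\ref{prop1} to upgrade ``proportional'' to ``equal up to a phase''.
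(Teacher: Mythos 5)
Your proof is correct and follows essentially the same elementary argument as the paper: a unit vector orthogonal to an orthonormal basis of $\im P$ must lie in the one-dimensional complement $\ker P=(\im P)^{\perp}$, forcing $|\psi_d'\>=e^{i\varphi}|\psi_d\>$. The paper phrases this as a one-line coordinate expansion of $|\psi_d'\>$ in the basis $\{|\psi_i\>\}_{i=1}^{d}$, while you additionally spell out the converse direction and the parametrization of the extensions, which the paper leaves implicit.
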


\begin{proof}
$|\psi_{d}^{\prime}\>=\sum_{i=1}^{d}x_{i}|\psi_{i}\>$. From $|\psi_{d}^{\prime }\>\perp
|\psi_{i}\>, $ $i=1,...,d-1$ we get $|\psi_{d}^{\prime }\>=x_{d}|\psi_{d}\>$ and from the
normalization of basis vectors we get $|x_{d}|=1.$
\end{proof}

From this Lemma and Proposition 1 we get

\begin{corollary}
\label{cor1}
Let us define
\be
\mathcal{P}_{d-1}^{d}=\{P\in \mathcal{B}\left(\mathbb{C}^d\right):P^{2}=P,\quad P^{\dagger}=P\qquad \tr(P)=d-1\},
\ee
\be
\mathcal{P}_{1}^{d}=\{P\in \mathcal{B}\left(\mathbb{C}^d\right):P^{2}=P,\quad P^{\dagger}=P\qquad \tr(P)=1\}.
\ee
There exists a unique bijective correspondence between the elements of the sets $
\mathcal{P}_{d-1}^{d}$ and $\mathcal{P}_{1}^{d}$. The bijective correspondence between the elements of the sets $
\mathcal{P}_{d-1}^{d}$ and $\mathcal{P}_{1}^{d}$ can be expressed as
follows
\be
\forall P\in \mathcal{P}_{d-1}^{d}\quad \exists !Q\in \mathcal{P}_{1}^{d}\quad P=\mathbf{1}-Q,
\ee
\be
\forall Q\in \mathcal{P}_{1}^{d}\quad \exists !P\in \mathcal{P}_{d-1}^{d}\quad Q=\mathbf{1}-P.
\ee

Moreover, if $P\in \mathcal{B}\left(\mathbb{C}^d\right):P^{2}=P,\quad P^{\dagger}=P,\qquad \tr(P)=d-1$ then there exists a unique
orthonormal projector $Q\in \mathcal{B}\left(\mathbb{C}^d\right):Q^{2}=Q,\quad Q^{\dagger}=Q,\qquad \tr(Q)=1$ such that
\be
\mathbf{1}=P+Q:Q=|\psi\>\<\psi|,
\ee
where $|\psi\>\in \ker P$ is any orthonormal basis vector of $\ker P$ and $\im Q=\ker P,$ $\ker Q=\im P$ and $PQ=QP=0$.

\begin{proof}
To prove the second statement let us consider orthonormal basis $\{|\psi_{i}\>\}_{i=1}^{d}$ in $\mathbb{C}^{d}$ we have
\be
\mathbf{1}=\sum_{i=1}^{d}|\psi_i\>\<\psi_i|.
\ee
In particular it holds for orthonormal bases of $\mathbb{C}^{d}$ that are extensions of the orthonormal bases of $\im P$ e.i. for
the bases of the form $\{|\psi_{1}\>,...,|\psi_{d-1}\>, |\psi\>\}$, where $\{|\psi_{1}\>,...,|\psi_{d-1}\>\}
$ is an orthonormal basis in $\im P$ \ and $|\psi\>\in \ker P$ : $||\psi||=1$
forms an orthonormal basis in one-dimensional  $\ker P$ so we have
\be
\mathbf{1}=\sum_{i=1}^{d-1}|\psi_i\>\<\psi_i|+|\psi\>\<\psi|\equiv P+Q,
\ee
where $Q=|\psi\>\<\psi|$ and from Proposition 1 we know that the orthogonal
projectors does not depend on the choice of bases in the range of these
projectors so $Q$ do not depend on the choice of the basis vector $|\psi\>$ and
is unique.
\end{proof}
\end{corollary}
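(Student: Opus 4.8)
The plan is to prove Corollary~\ref{cor1} by reducing everything to the single algebraic identity $Q=\mathbf{1}-P$ and then invoking Proposition~\ref{prop1} and Lemma~\ref{lem1} to pin down the structure and the uniqueness. First I would take an arbitrary $P\in\mathcal{P}_{d-1}^{d}$ and set $Q:=\mathbf{1}-P$. A one-line computation gives $Q^{\dagger}=Q$, $\tr Q=d-(d-1)=1$, and $Q^{2}=\mathbf{1}-2P+P^{2}=\mathbf{1}-P=Q$ using $P^{2}=P$, so $Q\in\mathcal{P}_{1}^{d}$; likewise $PQ=P-P^{2}=0=QP$. Running the same computation in reverse, for any $Q\in\mathcal{P}_{1}^{d}$ the operator $P:=\mathbf{1}-Q$ lies in $\mathcal{P}_{d-1}^{d}$. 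The two assignments $P\mapsto\mathbf{1}-P$ and $Q\mapsto\mathbf{1}-Q$ are manifestly inverse to one another, which establishes the claimed bijective correspondence; its uniqueness is immediate, since the relation $\mathbf{1}=P+Q$ determines $Q$ from $P$ and $P$ from $Q$.

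For the ``Moreover'' part I would invoke Proposition~\ref{prop1}: since $P$ has rank $d-1$, it yields the decomposition $\mathbb{C}^{d}=\im P\oplus\ker P$ with $\dim\ker P=1$, so there is a unit vector $|\psi\rangle\in\ker P$. Extending an orthonormal basis $\{|\psi_{1}\rangle,\dots,|\psi_{d-1}\rangle\}$ of $\im P$ by $|\psi\rangle$ produces an orthonormal basis of $\mathbb{C}^{d}$, hence $\mathbf{1}=\sum_{i=1}^{d-1}|\psi_{i}\rangle\langle\psi_{i}|+|\psi\rangle\langle\psi|=P+|\psi\rangle\langle\psi|$, and comparison with $\mathbf{1}=P+Q$ forces $Q=|\psi\rangle\langle\psi|$. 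Then $\im Q=\mathbb{C}|\psi\rangle=\ker P$ and $\ker Q=(\im Q)^{\perp}=\im P$ by orthogonality of the projector, while $PQ=QP=0$ follows as above.

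The only point requiring genuine care — and the one I would treat as the crux — is the well-definedness and uniqueness of $Q$, equivalently of the vector $|\psi\rangle$: a priori the line $\ker P$ is spanned by many unit vectors. This is precisely where Lemma~\ref{lem1} is used: any two admissible choices of $|\psi\rangle$ differ by a phase $e^{i\varphi}$, which cancels in the rank-one projector $|\psi\rangle\langle\psi|$; alternatively, Proposition~\ref{prop1} already states that the spectral decomposition of an orthogonal projector does not depend on the choice of orthonormal basis of its range. Either way $Q$ is uniquely determined by $P$, which completes the proof. Beyond this bookkeeping I do not anticipate any real obstacle.
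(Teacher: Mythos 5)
Your proposal is correct and follows essentially the same route as the paper's own proof of the ``Moreover'' part: extend an orthonormal basis of $\im P$ by a unit vector $|\psi\rangle\in\ker P$, read off $\mathbf{1}=P+|\psi\rangle\langle\psi|$, and appeal to Proposition~\ref{prop1} (or the phase freedom of Lemma~\ref{lem1}) for independence of the choice of $|\psi\rangle$. The only difference is that you also spell out the elementary algebraic verification that $P\mapsto\mathbf{1}-P$ is an involutive bijection between $\mathcal{P}_{d-1}^{d}$ and $\mathcal{P}_{1}^{d}$, a step the paper treats as implicit, which makes your write-up slightly more complete but not a different argument.
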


\section{Auxiliary lemmas}
\label{AppB}
After  a short introduction to the topic of unitary spaces contained in the Section~\ref{AppA} we are ready to present a conclusion which is contained in the two following propositions.
First, the Proposition~\ref{prop3} contains  a generalization of the bijection from the Corollary~\ref{cor1} for the  rank $k$ projectors, which allows us to formulate a general statement contained in the Theorem~\ref{prop6}. Finally, the Proposition~\ref{prop5} is an auxiliary result important in the proof of the above-mentioned theorem.




\begin{proposition}
\label{prop3}
Let $P$ be an orthogonal projector i.e.
\be
P\in \mathcal{B}\left(\mathbb{C}^d\right):P^{2}=P,\quad P^{\dagger}=P\qquad \tr(P)=k,\quad k=1,..,d-1,
\ee
then $P$ gives a unique decomposition of the space $\mathbb{C}^{d}$ of the form
\be
\mathbb{C}^{d}=\im P\oplus \ker P:\ker P=(\im P)^{\perp }
\ee
and $\dim (\ker P)=d-k,$ $\dim (\im P)=k.$ Moreover, for any such $P$
there exists a unique orthogonal projector
\be
Q\in \mathcal{B}\left(\mathbb{C}^d\right):Q^{2}=Q,\quad Q^{\dagger}=Q,\qquad \tr(Q)=d-k,
\ee
such that
\be
\mathbf{1}=P+Q,
\ee
where $\im Q=\ker P,$ $\ker Q=\im P$ and $PQ=QP=0$, so we have
\be
\forall k=1,..,d-1~\forall P\in \mathcal{P}_{k}^{d}\quad \exists !Q\in
\mathcal{P}_{d-k}^{d}\quad P=\mathbf{1}-Q,
\ee
where  $\mathcal{P}_{k}^{d}=\{P\in
\mathcal{B}\left(\mathbb{C}^d\right):P^{2}=P,\quad P^{\dagger}=P,\quad \tr(P)=k\}$.
\end{proposition}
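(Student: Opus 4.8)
The plan is to reduce everything to the spectral theorem for Hermitian operators together with the idempotency relation $P^{2}=P$, following exactly the scheme of Proposition~\ref{prop1} and Corollary~\ref{cor1} but dropping the restriction $k=d-1$. The whole statement is really three separate assertions — the orthogonal decomposition with its dimension count, the existence of the complementary projector $Q$, and its uniqueness — and I would treat them in that order.

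First I would record the structural facts about $P$. Since $P^{\dagger}=P$, the operator $P$ is diagonalizable in an orthonormal basis of $\mathbb{C}^{d}$; since $P^{2}=P$, every eigenvalue $\lambda$ satisfies $\lambda^{2}=\lambda$, hence $\lambda\in\{0,1\}$. Therefore $\im P$ is precisely the eigenspace for the eigenvalue $1$ and $\ker P$ the eigenspace for the eigenvalue $0$, and because eigenspaces of a Hermitian operator attached to distinct eigenvalues are mutually orthogonal we obtain the orthogonal direct sum $\mathbb{C}^{d}=\im P\oplus\ker P$ with $\ker P=(\im P)^{\perp}$. The dimension count is then immediate: $\tr P$ is the sum of the eigenvalues, i.e. the number of $1$'s, so $\dim\im P=\tr P=k$ and consequently $\dim\ker P=d-k$. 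The fact that the spectral decomposition of $P$ does not depend on the chosen orthonormal basis of $\im P$ (the $\sum_{i=1}^{k}|\psi_i\>\<\psi_i|$ statement) is proved verbatim as in Proposition~\ref{prop1}, whose argument never used $k=d-1$.

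Next I would produce $Q$ as the only possible candidate, $Q:=\mathbf{1}-P$, and verify the claimed properties by direct computation: $Q^{2}=(\mathbf{1}-P)^{2}=\mathbf{1}-2P+P^{2}=\mathbf{1}-P=Q$, $Q^{\dagger}=\mathbf{1}-P^{\dagger}=Q$, $\tr Q=d-\tr P=d-k$, and $PQ=P(\mathbf{1}-P)=P-P^{2}=0=QP$. For the subspace identities, $x\in\im Q$ iff $Qx=x$ iff $Px=0$ iff $x\in\ker P$, and dually $\ker Q=\im P$; in particular $Q\in\mathcal{P}_{d-k}^{d}$. Uniqueness is then forced by the defining equation itself: if $Q'\in\mathcal{P}_{d-k}^{d}$ also satisfies $\mathbf{1}=P+Q'$, then $Q'=\mathbf{1}-P=Q$. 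Combining these gives the final displayed line $\forall k=1,\ldots,d-1\ \forall P\in\mathcal{P}_{k}^{d}\ \exists!\,Q\in\mathcal{P}_{d-k}^{d}:\ P=\mathbf{1}-Q$.

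I do not anticipate a genuine obstacle: the proposition is a routine consequence of the spectral theorem, and the only point deserving a word of care is that existence and uniqueness of $Q$ come together — the equation $\mathbf{1}=P+Q$ leaves no freedom, while the one-line computations above show that $\mathbf{1}-P$ is automatically an orthogonal projector of the complementary rank. If a more constructive presentation paralleling Corollary~\ref{cor1} is preferred, one instead extends an orthonormal basis $\{|\psi_i\>\}_{i=1}^{k}$ of $\im P$ to an orthonormal basis $\{|\psi_i\>\}_{i=1}^{d}$ of $\mathbb{C}^{d}$ and sets $Q=\sum_{i=k+1}^{d}|\psi_i\>\<\psi_i|$; the generalized Proposition~\ref{prop1} guarantees that this $Q$ is independent of all the choices, and the same identities as above finish the proof.
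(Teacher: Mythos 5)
Your proof is correct. Note that the paper itself gives no explicit proof of Proposition~\ref{prop3}: it is presented as the direct generalization of Corollary~\ref{cor1}, whose proof proceeds constructively by extending an orthonormal basis of $\im P$ to one of $\mathbb{C}^{d}$ and reading off $Q$ from the resolution of the identity $\mathbf{1}=\sum_{i}|\psi_i\>\<\psi_i|$, with independence of the choices supplied by Proposition~\ref{prop1}. Your main argument takes a different and arguably cleaner route: the spectral theorem plus $\lambda^{2}=\lambda$ gives the orthogonal decomposition and the dimension count ($\dim\im P=\tr P=k$), and then you verify by one-line algebra that $Q:=\mathbf{1}-P$ is Hermitian, idempotent, of trace $d-k$, satisfies $PQ=QP=0$ and $\im Q=\ker P$, $\ker Q=\im P$, with uniqueness forced immediately by the equation $\mathbf{1}=P+Q$. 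This avoids any basis choices and any appeal to the basis-independence argument of Proposition~\ref{prop1}, while the paper's style (which you correctly sketch as an alternative at the end) buys an explicit spectral presentation $Q=\sum_{i=k+1}^{d}|\psi_i\>\<\psi_i|$ that is convenient later, e.g.\ in Proposition~\ref{prop5} and in the proof of Theorem~\ref{prop6}, where projectors are manipulated through such rank-one decompositions. Either route fully establishes the statement; yours is the more economical one, and your observation that existence and uniqueness come together from $Q=\mathbf{1}-P$ is exactly the point on which the claim rests.
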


\begin{remark}
\label{RemP}
Reader notices that for our purposes in the Theorem~\ref{prop6} we can choose  bijection which establishes one to one correspondence between set $\mathcal{P}_k^d$ of rank $k$ projectors and the set $\mathcal{P}_1^d$ of rank one projectors.
\end{remark}

In the following we will need also

\begin{proposition}
\label{prop5}
\bigskip Let $|\psi\>\in\mathbb{C}^{d}$ and $|\phi_{i}\>\in\mathbb{C}^{d}$, $i=1,..,d-1$ are such that
\be
\<\psi|\psi\>=1,~\<\phi_{i}|\phi_{j}\>=\delta _{ij}.
\ee
Then
\be
P=|\psi\>\<\psi|\otimes
\sum_{i=1}^{d-1}|\phi_{i}\>\<\phi_{i}|=\sum_{i=1}^{d-1}|\omega_{i}\>\<\omega_{i}|\in \mathcal{B}\left(\mathbb{C}^d\ot \mathbb{C}^d\right),
\ee
where $\omega_{i}=|\psi\>\otimes |\phi_{i}\>,$ \ is an orthogonal projector of rank $d-1$ (in
fact $P\in \mathcal{P}_{d-1}^{d^{2}})$ and it is generated by simple tensors
$\omega_{i}$, so it is of a particular form. Note that
\be
\left\{\sum_{i=1}^{d-1}|\phi_{i}\>\<\phi_{i}|:\<\phi_{i}|\phi_{j}\>=\delta _{ij}\right\}=\mathcal{P}
_{d-1}^{d}\in \mathcal{B}\left(\mathbb{C}^d\right).
\ee
\end{proposition}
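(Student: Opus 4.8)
The plan is to prove Proposition~\ref{prop5} in three short steps, essentially reducing everything to the Hilbert-space structure of simple tensors. First I would verify that the vectors $\omega_i = |\psi\rangle \otimes |\phi_i\rangle$ for $i=1,\ldots,d-1$ form an orthonormal system in $\mathbb{C}^d \otimes \mathbb{C}^d$: compute $\langle \omega_i | \omega_j \rangle = \langle \psi | \psi \rangle \langle \phi_i | \phi_j \rangle = 1 \cdot \delta_{ij} = \delta_{ij}$, using the tensor-product inner product and the hypotheses $\langle \psi|\psi\rangle = 1$ and $\langle \phi_i|\phi_j\rangle = \delta_{ij}$. Then $P = \sum_{i=1}^{d-1} |\omega_i\rangle\langle\omega_i|$ is manifestly self-adjoint, satisfies $P^2 = P$ because the $\omega_i$ are orthonormal, and has $\tr P = \sum_{i=1}^{d-1} \langle \omega_i|\omega_i\rangle = d-1$; hence $P \in \mathcal{P}_{d-1}^{d^2}$.

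Second, I would check the factorized form: $|\psi\rangle\langle\psi| \otimes \sum_{i=1}^{d-1} |\phi_i\rangle\langle\phi_i| = \sum_{i=1}^{d-1} \bigl(|\psi\rangle\langle\psi|\bigr) \otimes \bigl(|\phi_i\rangle\langle\phi_i|\bigr) = \sum_{i=1}^{d-1} \bigl(|\psi\rangle\otimes|\phi_i\rangle\bigr)\bigl(\langle\psi|\otimes\langle\phi_i|\bigr) = \sum_{i=1}^{d-1}|\omega_i\rangle\langle\omega_i|$, which is just bilinearity of the tensor product together with the elementary identity $(|a\rangle\langle b|)\otimes(|c\rangle\langle d|) = (|a\rangle\otimes|c\rangle)(\langle b|\otimes\langle d|)$. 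This shows $P$ has the claimed "particular form": it is a rank-$(d-1)$ projector generated entirely by simple (product) tensors, which is exactly the feature exploited in the proof of Theorem~\ref{prop6}.

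Third, for the final displayed equality I would invoke Proposition~\ref{prop1}: the set of all operators of the form $\sum_{i=1}^{d-1}|\phi_i\rangle\langle\phi_i|$ with $\langle\phi_i|\phi_j\rangle = \delta_{ij}$ is precisely $\mathcal{P}_{d-1}^d$. Indeed, every such sum is an orthogonal projector of rank $d-1$ (same computation as above, one factor), so the left side is contained in $\mathcal{P}_{d-1}^d$; conversely, by Proposition~\ref{prop1} any $P \in \mathcal{P}_{d-1}^d$ equals $\sum_{i=1}^{d-1}|\phi_i\rangle\langle\phi_i|$ for any orthonormal basis $\{|\phi_i\rangle\}$ of its range $\im P$, giving the reverse inclusion.

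I do not expect a genuine obstacle here — the statement is essentially bookkeeping about how rank-one product projectors tensor up — but the one point that deserves care is confirming that $P$ indeed has rank exactly $d-1$ and not less, i.e. that the $\omega_i$ remain linearly independent; this follows immediately from their orthonormality, which in turn relies on $\langle\psi|\psi\rangle = 1$ (a zero vector $|\psi\rangle$ would collapse everything). So the only thing to be vigilant about is not dropping the normalization hypothesis on $|\psi\rangle$ anywhere in the argument.
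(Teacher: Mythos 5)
Your proof is correct and complete: the orthonormality computation for the $\omega_i$, the identity $(|\psi\rangle\langle\psi|)\otimes(|\phi_i\rangle\langle\phi_i|)=|\omega_i\rangle\langle\omega_i|$, and the two inclusions for the set equality (the reverse one via Proposition~\ref{prop1} and the spectral decomposition of a rank-$(d-1)$ projector) are exactly what is needed. The paper itself states Proposition~\ref{prop5} without any proof, treating it as elementary bookkeeping, so your write-up simply supplies the standard verification that the authors left implicit; your closing remark about the normalization $\langle\psi|\psi\rangle=1$ being the point that guarantees rank exactly $d-1$ is the right thing to flag.
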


\end{document}